\newtheorem{remark}{Remark}
\newtheorem{proposition}{Proposition}
\newtheorem{theorem}{Theorem}
\newtheorem{definition}{Definition}
\newtheorem{assumption}{Assumption}
\newcommand{\m}{\mathbb}
\title{\Large \bf Robust economic model predictive control with zone tracking}
\author{\centerline{\normalsize Benjamin Decardi-Nelson$^{a}$, Jinfeng Liu$^{a,}$\thanks{Corresponding author: J. Liu. Tel: +1-780-492-1317. Fax: +1-780-492-2881. Email: jinfeng@ualberta.ca}}\vspace{5mm}\\
    \centerline{\small $^{a}$ Department of Chemical \& Materials Engineering, University of Alberta,}\\
    \centerline{\small Edmonton, AB, Canada, T6G 1H9}}
\begin{document}

\date{}

\maketitle
\setstretch{1.39}

\begin{abstract}
This paper presents a robust economic model predictive control (EMPC) formulation with zone tracking for discrete-time uncertain nonlinear systems. The proposed design ensures that the zone tracking objective is achieved in finite steps and at the same time optimizes the economic performance. In the proposed design, instead of tracking the original target zone, a robust control invariant set within the target zone is determined and is used as the actual zone tracked in the proposed EMPC. This approach ensures that the zone tracking objective is achieved within finite steps and once the zone tracking objective is achieved (the system state enters the robust control invariant set), the system state does not come out of the target zone anymore. To optimize the economic performance within the zone in the presence of disturbances, we introduce the notion of risk factor in the controller design. An algorithm to determine the economic zone to be tracked is provided. The risk factor determines the conservativeness of the controller and provides a way to tune the EMPC for better economic performance. A nonlinear chemical example is presented to demonstrate the performance of the proposed formulation.
\end{abstract}

\noindent{\bf Keywords:} Predictive control; robustness; zone tracking; robust control invariant; nonlinear systems. 


\section{Introduction} \label{sec:introduction}
Nonlinear model predictive control (MPC) with a general objective known as economic MPC (EMPC) has received significant attention in recent years \cite{rawlings2012,liu2016,ellis2014}. The objective function in an EMPC generally reflects some economic performance criterion such as profit maximization or heat minimization. This is in contrast with the tracking MPC where the objective is a positive definite quadratic function. The integration of process economics directly in the control layer makes EMPC of interest in many areas especially in the process industry. There has been a significant number of applications of EMPC \cite{liu2015,decardi2018,zhang2020,griffith2017}. To address stability and computational issues of EMPC, different formulations of EMPC has been proposed \cite{angeli2011,liu2016,ellis2014,amrit2011}. 

Uncertainties arise as a result of imperfect models and/or unmeasured disturbances. The presence of uncertainties in any control system can result in performance degradation and/or loss of feasibility which can lead to loss of stability. Due to the integration of process economics in the control layer, it is not fully understood how the presence of uncertainties affect the economic performance of EMPC. In the context of tracking MPC, robust MPC is a common approach used to address the robustness of a control system in the presence of uncertainties. See \cite{mayne2016} for a recent survey on robust MPC as well as the associated challenges. Robust MPC techniques have also been applied to EMPC in the literature. In \cite{huang2012}, an EMPC formulation which is based on robust tracking of a prior nominal trajectory was proposed. In \cite{lucia2014}, a robust EMPC formulation based on scenario tree approach was presented. In \cite{mao2020}, a min-max robust EMPC algorithm was proposed to address transmission delays in networked control systems. Tube-based formulations with and without stochastic information have also been proposed \cite{bayer2014,bayer2016,dong2018}. However, they either use a min-max optimization approach or use the nominal model with tightened invariant constraints. In both cases, the computational demands are very high even for linear systems. 

While robust MPC techniques are common in the design of tracking MPCs for handling uncertainty, it was pointed out that simply transferring robust MPC techniques to EMPC could result in poor economic performance \cite{bayer2014}. This is because economic optimization and robustness are two objectives and often may conflict with each other. Robust MPC techniques have been designed to reject all disturbances to achieve their desired goal which may not be the case in EMPC as some disturbances can lead to better economic performance. In our previous work, we proposed an EMPC with zone tracking scheme to handle the two objectives in one integrated framework \cite{liu2018}. The use of a target zone allows for flexible handling of multiple objectives in the controller design and at the same time improves the degree of robustness of the controller due to the inherent robustness of zone control. It is worth noting that the concept of zone control is not new. Zone MPC have been reported in several areas such as diabetes treatment \cite{grosman2010}, control of building heating system \cite{privara2011}, control of irrigation systems \cite{mao2018} and coal-fired boiler-turbine generating system \cite{zhang2020}.  In the context of MPC literature, zone control is often dismissed as a trick to avoid feasibility issues and has received less attention in terms of theoretical analysis. A recent study on the stability analysis of MPC with generalized zone tracking \cite{liu2019} paves the way for further development of zone control.

In \cite{liu2018}, the stability and economic performance of the EMPC with zone tracking framework were studied without considering process uncertainty. In this work, we extend \cite{liu2018} to consider constrained nonlinear systems subject to unmeasured but bounded disturbances.  Instead of tracking the original target zone, we propose to track a robust control invariant set within the target zone so that once the system state enters the invariant zone, it will not exit the target zone anymore. The proposed design can ensure that the zone tracking objective is achieved in finite steps and at the same time optimizes the economic performance. It is found that in the presence of uncertainty, the economic performance of EMPC not only depends on the optimal steady state but also the size of the tracked zone. To take this into account, we introduce the notion of risk factor in the controller design. The risk factor determines the conservativeness of the controller and provides a way to tune the EMPC for better economic performance. An algorithm to determine the zone integrating the risk factor is also proposed. A nonlinear chemical example is presented to demonstrate the performance of the proposed formulation. We note that the extension of \cite{liu2018} to this work is not a trivial task. The contributions of this work include:
\begin{itemize}
    \item A detailed EMPC formulation with zone tracking for nonlinear systems subject to bounded disturbances
    \item The introduction of risk factor into the design to achieve improved economic performance in the presence of disturbances
    \item A description of an algorithm to determine the economic zone to be tracked
    \item A rigorous analysis of the feasibility, convergence and stability of the proposed design 
    \item A simulation example with extensive simulations illustrating the effectiveness of the proposed design
\end{itemize}

The remainder of this paper is organized as follows. Section 2 presents the preliminaries and describes the problem to be tackled. Section 3 describes the proposed robust economic MPC framework and the computation of the economic zone. Stability analysis for the proposed formulation is carried in Section 4. In Section 5, a numerical example is used to demonstrate the performance of the proposed approach. Section 6 summarizes the findings of this work and presents the future outlook.

\section{Preliminaries} \label{sec:preliminaries}

\subsection{Notation}
Throughout this work, the symbol $\m{I}_{\geq 0}$ denotes the set of nonnegative integers $\{0,1,2,\hdots\}$. $\m{I}_M^N$ is the set of integers from $M$ to $N$: $\m{I}_M^N = \{M,M+1,\hdots\,N\}$. $|\cdot|$ denotes the Euclidean norm of a scalar or a vector. $\|\cdot\|_n$ denotes the $n$-norm of a scalar or vector. A continuous function $\alpha:[0,a) \rightarrow [0,\infty)$ is said to belong to class $\mathcal{K}$ if it is strictly increasing and statisfies $\alpha(0)=0$. A class $\mathcal{K}$ function $\alpha$ is called a class $\mathcal{K}_\infty$ function if $\alpha$ is unbounded. A continuous function $\sigma:[0,\infty) \rightarrow [0,a)$ is said to belong to class $\mathcal{L}$ if it is strictly decreasing and statisfies lim$_{x \rightarrow \infty}\sigma(x)=0$. A continuous function $\beta:[0,a) \times [0,\infty)\rightarrow [0,\infty)$ is said to belong to class $\mathcal{KL}$ if for each fixed $r$, $\beta(r,s)$ is a class $\mathcal{L}$, and for each fixed $s$, $\beta(r,s)$ is a $\mathcal{K}$ function. The operator `$/$' means set subtraction such that $\m A/\m B = \{x\in R^{n_x}: x\in \m A, x \notin \m B\}$. 

\subsection{System description and control problem formulation}

In this work, we consider discrete-time nonlinear systems described by the following state-space model:
\begin{equation}\label{eqn:actual_system}
    x(n+1) = f(x(n),u(n),w(n)) 
\end{equation} 
where $x(n) \in \m{R}^{n_x}$ is the system state vector at time instant $n \in \m{I}_{\geq 0}$, $u(n) \in \m{R}^{n_u}$ is the control input vector and $w(n) \in \m{R}^{n_w}$ denotes the system disturbance vector.  It is assumed that the system state and the control input vectors are restricted to be in the coupled non-empty convex set of the following form:
 \begin{equation} \label{eqn:constraints}
    (x(n),u(n)) \in \m{Z} \subseteq \m{X} \times \m{U}
 \end{equation}
It is also assumed that the disburance is unknown and contained in a set $\m{W}$ ($w\in \m{W}$) where 
\begin{equation*}
    \mathbb{W} := \{ w \in \m{R}^{n_w}: \| w \|_{\infty} \leq \theta, \theta > 0 \}
\end{equation*}
with $\theta$ being a positive real number. Throughout this paper, we make the following assumptions.

 \begin{assumption}[Compact constraints] \label{asp:compact_constraints}
     The sets $\m{X}$, $\m{U}$ and $\m{W}$ are compact with $\m{W}$ containing the origin in its interior.
 \end{assumption}

 \begin{assumption}[Continuity]\label{asp:continuity_model}
     The function $f:\m{R}^{n_x} \times \m{R}^{n_u} \times \m{R}^{n_w} \rightarrow \m{R}^{n_x}$ is locally Lipschitz with respect to $x$ and $w$ for all $x\in \m Z$, $u\in \m U$, $w\in \m W$. This implies that there exist positive constants $L_x$ and $L_w$ such that:
     \begin{equation} \label{eqn:lipschiz_continuity}
        | f(x, u, w) - f(z,u,0) | \leq L_w |w| + L_x | x - z |   
     \end{equation} 
for all $x,z \in \m{X}$, $u \in \m{U}$ and $w \in \m{W}$.
 \end{assumption}

 We also recall the following definitions on forward invariant sets and robust control invariant sets. These concepts will be used in the description of the proposed EMPC design.

 \begin{definition}[Foward invariant set \cite{blanchini1999}] \label{postively_invariant_set}
     A set $\m X_r\subseteq \m{X}$ is said to be a forward or positively invariant set of the system $x(n+1)=f(x(n))$ if for every $x(n) \in \m X_r$, $f(x(n)) \in \m X_r$. 
 \end{definition}

 \begin{definition}[Robust control invariant set \cite{blanchini1999}]\label{robust_control_invariant_set}
     A set $\m X_r \subseteq \m{X}$ is said to be a robust control invariant set (RCIS) for system (\ref{eqn:actual_system}) if for every $x(n) \in \m X_r$, there exists a feedback control law $u(n) = \mu(x(n)) \in \m{U}$ such that $\m X_r$ is forward invariant for all $w(n) \in \m{W}$. 
 \end{definition}

The primary control objective of this work is to design a feedback controller such that it can drive the state of system~(\ref{eqn:actual_system}) to a pre-determined target zone $\m X_t\subset \m X$ if the initial state of system~(\ref{eqn:actual_system}) is outside the target zone ($x(0)\in\m X/\m X_t$) and maintain the state of system~(\ref{eqn:actual_system}) within the target set $\m X_t$ when the zone tracking is achieved. A secondary objective is to minimize the average economic cost over the infinite horizon $T$ characertized as follows:
 \begin{equation} \label{average_cost0}
     \limsup_{T \rightarrow \infty} \frac{1}{T} \sum_{n=0}^{T-1} \ell_e(x(n),u(n))
 \end{equation}
 where $\ell_e:\mathbb{R}^{n_x} \times \mathbb{R}^{n_u} \rightarrow \mathbb{R}$ is a general economic stage cost which is not necessarily quadratic or positive definite. In order to achieve the above control objectives, we resort to EMPC with zone tracking \cite{liu2018} and takes into account the presence of process disturbance $w$ in the design of the EMPC.

\section{Robust EMPC with zone tracking}

In this section, we present the design of the proposed robust EMPC with zone tracking scheme. The proposed design steers the system state to the target zone and optimizes the economic objective in the process. 

Given that there is model uncertainty due to the presence of process disturbances and that the target zone $\m X_t$ is not necessarily control invariant, a robust control invaraint set $\m X_e$ within the target zone ($\m X_e\subseteq \m X_t$) is determined and is used as the actual tracking zone in the proposed EMPC design \cite{liu2019}. To optimize the economic objective within the robust control invariant set, the set $\m X_e$ is optimized also according to the economic objective. In the remainder of this work, we will refer to this robust control invariant set $\m X_e$ as the economic zone. 


Let us first assume that such an economic zone $\m X_e$ has been determined. The procedure to create such an economic zone will be discussed in section~\ref{sec:economiczone}. It is also assumed that there is a steady state $(x_s,u_s)$ with $x_s\in\m X_e$, $u_s\in\m U$ such that it solves the following steady-state optimization problem: 
\begin{subequations} \label{eqn:ss_opt}
    \begin{align}
        (x_s,u_s) &= \arg \min ~ \ell_e(x,u) \label{eqn:ss_opt_cost}\\
        s.t. ~~~ & x = f(x,u,0) \label{eqn:ss_opt_model}\\ 
        & (x,u) \in \m{X}_e \times \m{U} \label{eqn:ss_constraint}
    \end{align}
\end{subequations}
Without loss of generality, we assume that $(x_s,u_s)$ is the unique solution to the above steady-state optimization problem.

\subsection{Design of the proposed EMPC with zone tracking}

With information about the current state $x(n)$, the proposed EMPC uses the nominal model of system~(\ref{eqn:actual_system}):
\begin{equation} \label{eqn:nominal_system}
    z(k+1) = f(z(k),v(k),0)
\end{equation}  
with the initial condition $z(0) = x(n)$ to find a control sequence $ \textbf{v}=\{ v(0), \hdots, v(N-1)\}$ and the associated state sequence $\textbf{z}=\{ z(0), \hdots, z(N) \}$ over the entire prediction horizon $N$ to minimize the cost function:
\begin{equation} \label{eqn:cost_function}
    V_N(x(n),\textbf{v}) = \sum_{k=0}^{N-1} \ell(z(k),v(k)) 
\end{equation}
In (\ref{eqn:nominal_system}) and (\ref{eqn:cost_function}), $z(n) \in \m{X} \subseteq \m{R}^{n_x}$ and $v(n) \in \m{U} \subseteq \m{R}^{n_u}$ are the nominal state vector and computed control input vector respectively in the proposed EMPC. The stage cost $\ell(\cdot,\cdot)$ is defined as follows: 
\begin{equation} \label{eqn:stage_cost}
    \ell(z,v) = \ell_e(z,v) + \ell_z(z)
\end{equation}
where $\ell_e(\cdot,\cdot)$ is the economic stage cost as introduced in (\ref{average_cost0}) and $\ell_z(\cdot)$ is a zone tracking penalty term which is defined as below:
\begin{subequations} \label{eqn:zone_penalty_opt}
    \begin{align} 
        \ell_z(z) = \min_{z^z} & ~~~ c_1(\| z - z^z \|_1) + c_2(\| z - z^z \|_2^2) \label{eqn:zone_penalty_opt_a}\\
        s.t. & ~~~ z^z \in \m{X}_e \label{eqn:zone_penalty_opt_b}
    \end{align}
\end{subequations}
with $c_1 \in \m{R}_{\geq 0}$, $c_2 \in \m{R}_{\geq 0}$ being non-negative weights on the $l_1$ norm and the squared $l_2$ norm respectively, $z^z$ is a slack variable and $\m{X}_e$ is the economic zone to be tracked. The zone tracking stage cost reflects the distance of the system states from the economic zone and is positive definite.

At each sampling time, the following dynamic optimization problem $\mathcal{P}_N(x(n))$ is solved:
\begin{subequations} \label{eqn:zone_opt}
    \begin{align} 
        \min_{\bf{v}} & ~~~ V_N(x(n),\textbf{v}) \label{eqn:zone_opt_a}\\
        s.t. & ~~~ z(k+1) = f(z(k),v(k),0), ~~~ k = 0, \hdots, N-1 \label{eqn:zone_opt_b}\\
        & ~~~ z(0) = x(n) \label{eqn:zone_opt_c}\\
        & ~~~  z(k) \in \m{X}, ~~~ k = 0, \hdots, N-1 \label{eqn:zone_opt_d}\\
        & ~~~ v(k) \in \m{U}, ~~~ k = 0, \hdots, N-1  \label{eqn:zone_opt_e}\\
        & ~~~ z(N) = x_s \label{eqn:zone_opt_f}
    \end{align}
\end{subequations}
In the optimization problem (\ref{eqn:zone_opt}) above, Equation (\ref{eqn:zone_opt_c}) is the initial state constraint, Equation (\ref{eqn:zone_opt_f}) is a terminal equality constraint and Equations (\ref{eqn:zone_opt_d}) and (\ref{eqn:zone_opt_e}) are the constraints on the state and inputs respectively. As a result of the cost function employed, the optimization problem in (\ref{eqn:zone_opt}) is a multi-objective optimization problem which seeks to minimize the deviation of the system's state from the economic zone $\m{X}_e$ while optimizing the economic objective.



The solution of $\mathcal{P}_N(x(n))$ denoted $\mathbf{v^*}$ gives  an optimal value of the cost $V_N^0 (x(k))$ and at the same time $u(n)=v^*(0)$ is applied to the actual system (\ref{eqn:actual_system}). Notice that the nominal system is used in the optimization and therefore generates mismatch between the prediction in the EMPC optimization and the actual system evolution. We will show in the next section that under some mild conditions, our proposed controller is able to stabilize the plant in the presence of this mismatch. The prediction horizon is shifted forward by one sampling time once information about $x(n+1)$ is known and the optimization problem $\mathcal{P}_N(x(n+1))$ is solved to find $u(n+1)$. 

\subsection{Construction of the economic zone}\label{sec:economiczone}

In the previous section, we have presented the proposed EMPC formulation with zone tracking. In the proposed design, a robust control invariant economic zone $\m X_e$ replaces the original target zone and is the zone to be tracked. In this section, we discuss how to construct the economic zone. 

\subsubsection{Risk factor}

In determining the economic zone $\m X_e$, the idea is to find a robust control invariant set within the original target zone $\m X_t$ while taking into account the economic performance of the system within the control invariant set. The use of a robust control invariant set as the actual tracking zone ensures that the system state converges to the zone and will not leave the zone again once enters the invariant zone even in the presence of disturbances. This will be shown in the stability analysis section. 

While any robust control invariant set within the original target zone can achieve the zone tracking objective, the size of the robust control invariant set affects the economic performance of the system. Due to the presence of disturbance in the system, the overall economic perofrmance of the system not only depends on the optimal steady state within the zone $\m X_e$ but also depends on the economic performance of the system within the zone. In order to account for this in determining the economic zone, we introduce the concept of risk factor $\delta \in \m{R}$ in the $\m X_e$ construction. The risk factor is a positive scalar that can be tuned. It determines the size of the economic zone and ultimately, the conservativeness of the controller. When a higher risk factor is used, the size of the economic zone is larger and the controller is less conservative. Algorithm~\ref{alg:ezone} presented in the next subsection will summarize how the risk factor is used in the computation of the economic zone. 

\subsubsection{Computing the economic zone}

The algorithm for determining the economic zone builds on the graph-based robust control invariant set computing algorithm developed in \cite{decardi2021}. In the algorithm, the state space $\m X$ is quantized into small closed sets $B_i,i=1,\hdots l$. The collection of these cells, $\mathcal{C} = \{ B_1, \hdots, B_l \}$, is called the finite covering of the state space $\m{X}$. The closed sets in the finite covering $\mathcal{C}$ are also known as cells or boxes such that:
\begin{subequations}
    \begin{align}
         & \m{X} \subseteq \cup_{B_i \in \mathcal{C}} B_i  \\ 
         & B_i \cap B_j = \emptyset, ~ \forall B_i,B_j \in \mathcal{C} ~ \text{with} ~ i \neq j 
    \end{align}
\end{subequations}
Following the quantization, the system dynamics is approximated using a directed graph $G$. Graph investigations are then carried out on the directed graph to determine the cells that approximate the largest robust control invariant set while the ones that do not form part of the robust control invariant set are discarded. 

We denote by $\mathcal{C}_r$ the cells that approximate the robust control invariant set.

\begin{algorithm} \label{alg:ezone}
    \caption{Determination of economic zone}
    \KwIn{ $f$, $\m X_t$, $\m{U}$, $\m{W}$, $\ell_e$, $\delta$}
    \KwOut{ $\m{X}_e$ }

    Create a finite convering $\mathcal{C}_t$ of $\m X_t$ \\

    Initialize the cells that satisfy the economic criterion $\mathcal{C}_e$ as empty array \\

    \For{$B_i$ in $\mathcal{C}_t$}{
        \If{$ \forall x \in B_i,  \exists u \in \m{U}: \ell_e(x+f(x,u,w)-f(x,u,0),u) \leq \delta, \; \forall w \in \m{W}$}{
            Add $B_i$ to $\mathcal{C}_e$
        }
    }
    Initialize Algorithm 2 in \cite{decardi2021} with $\mathcal{C}_e$ \\
    Compute an inner approximation of the largest robust control invariant set $\mathcal{C}_r$ contained in $\mathcal{C}_e$\\

    $\m{X}_e \leftarrow \cup_{B_i \in \mathcal{C}_r} B_i$ \\
    
    \textbf{return} $\m{X}_e$
\end{algorithm}

The procedure for determining the economic zone $\m{X}_e$ is summarized in Algorithm \ref{alg:ezone}. The algorithm has a few inputs including the system model $f$, the risk factor $\delta$, the economic objective $\ell_e$ as well as the input and the disturbance sets. The algorithm returns the calculated economic zone $\m{X}_e$. 

Intuitively, the algorithm seeks to find an economic zone that compensates for the effects of the disturbances on the economics of the closed-loop system while ensuring good stability property. This is achieved by backing-off from the boundaries of the target zone to obtain $\m X_e$. The algorithm is in two main steps. In the first step of the algorithm, the target zone is quantized with the help of a finite covering $\mathcal{C}_t$ and then the set of cells $\mathcal{C}_e$ within the target zone $\m X_t$ that satisfy the economic criterion is determined. Consider a cell $B_i \in \mathcal{C}_t$, if 
\begin{equation}
    \forall x \in B_i,  \exists u \in \m{U}: \ell_e(x+f(x,u,w)-f(x,u,0),u) \leq \delta, \; \forall w \in \m{W}
\end{equation}
then the cell $B_i$ is added to $\mathcal{C}_e$. The remainder of the cells in $\mathcal{C}_t$ are then discarded. The choice of the selection criterion in Algorithm \ref{alg:ezone} stems from the fact that every state within the target zone is a potential initial state as well as a potential end state after one time-step. We focus on the latter since our proposed controller does not consider the effects of the disturbance. The idea is that, for any potential end state given by the nominal system, we know that the disturbance will be applied in the real system. Thus, we are taking into consideration the effects of the disturbance on the economics implicitly. By considering the end state in the selection criterion, we want to guarantee that the economic performance of the closed-loop system is bounded above by the risk factor $\delta$ irrespective of the disturbance $w$. 

It is worth mentioning that the set formed by the union of the cells in $\mathcal{C}_e$ is not necessarily robust control invariant. The second step, therefore seeks to address this by finding the cells in $\mathcal{C}_e$ that inner approximate the largest robust control invariant set. This is achieved by initializing Algorithm 2 in \cite{decardi2021} with $\mathcal{C}_e$ and then using the algorithm to find the cells that inner approximate the largest robust control invariant set $\mathcal{C}_r$ contained in the set formed by $\mathcal{C}_e$.

\begin{remark}
    
The cells in $\mathcal{C}_r$ needs to be combined and represented in a way that makes the optimization problem presented in (\ref{eqn:zone_opt}) easier to solve. One of such representations is to find an inner approximation convex hull of the cells in $\mathcal{C}_r$ if the cells form a convex set. Another approach is to use a more general set representation such as alpha hull. However, this may lead to the use of non-convex sets in (\ref{eqn:zone_opt}) which can increase the complexity of the optimization problem. 

\end{remark}

\begin{remark} 
As a result of the presence of the disturbances, it is in general difficult to determine the optimal control inputs. One approach to determine the optimal feedback control law is to solve a min-max optimization problem \cite{mayne2016}. However, it suffers from high computational demand which makes it challenging to implement. In this paper, we propose an EMPC scheme based on only the nominal model and zone tracking. The zone to be tracked can be considered as an economic trust region. This makes our propsed approach similar to other trust-region based approaches such as the Lyapunov-based EMPC \cite{heidarinejad2012} and that presented in \cite{griffith2017}. However, in the proposed formulation, we do not make use of any additional constraints such as Lyapunov constraints in the formulation. Moreover, our formulation introduces economic risk factor in the controller design thus implicitly considers an upper bound on the asymptotic average performance of the closed-loop system.
\end{remark}

\section{Stability analysis}

In this section, we address the stability of the proposed control algorithm. To proceed with the discussion, we first introduce a few relevant definitions and assumptions.  

First, we define the $N$-step reachable set of the optimal steady-state $x_s$ based on the nominal model. The $N$-step reachable set will be used to construct a set for the initial state of the system to esnure the feasibility of the proposed EMPC. 

\begin{definition}[$N$-step reachable set] Consider the nominal system of system~(\ref{eqn:actual_system}) (i.e., $w\equiv 0$ for all time). A set $\m X_N$ is called the $N$-step reachable set with respect to the steady-state $x_s$ if it contains all the states that can be steered to $x_s$ in $N$ steps while satisfying the state and input constraints. That is, 
    \begin{equation}\label{eqn:xn}
        \m X_N=\{x(0)\in \m X|\exists(x(n), u(n))\in \m Z, n\in \m I_0^{N-1}, \textmd{such that } x(N)=x_s\}
    \end{equation}
\end{definition}

\begin{assumption}\label{ass:xncompactness}
The $N$-step reachable set $\m X_N$ is a compact set with $x_s$ in the interior of set. 
\end{assumption}

Next, we introduce the definition of dissipative systems and the relevant assumptions. These definition and assumptions will be used to establish the stability of the proposed EMPC.

\begin{definition}[Strictly dissipative systems] 
    The nominal system $\tilde x(n+1) = f(\tilde x(n),u(n),0)$ is strictly disspipative with respect to the supply rate $s:\mathbb{X} \times \mathbb{U} \rightarrow \mathbb{R}$ if there exists a continuous storage function $\lambda(\cdot):\mathbb{X} \rightarrow \mathbb{R}$ and a $\mathcal{K}_{\infty}$ function $\alpha(\cdot)$ such that the following hold for all $\tilde x \in \mathbb{X}$ and $u \in \mathbb{U}$:
    \begin{equation} \label{eqn:strict_dissipativity}
        \lambda(f(\tilde x,u)) - \lambda(\tilde x) \leq s(\tilde x,u) - \alpha(|\tilde x - x_s|)
    \end{equation}
\end{definition}

\begin{assumption}[Strict disspativity] \label{asp:strict_dissipativity}
    The nominal system $\tilde x(n+1) = f(\tilde x(n),u(n),0)$ is strictly disspative with respect to the supply rate 
    \begin{equation*}
        s(\tilde x,u) = \ell_e(\tilde x,u) - \ell_e(x_s,u_s)
    \end{equation*}
\end{assumption}

\begin{assumption}[Weak controllability] \label{asp:weak_controllability}
    There exists a $\mathcal{K}_\infty$ function $\gamma(\cdot)$ such that for all $x \in \m{X}_N$, there exists a feasible solution to (\ref{eqn:zone_opt}) such that $\displaystyle\sum_{k=0}^{N-1} |v(k) - u_s| \leq \gamma(|x-x_s|)$.
\end{assumption}

The following proposition provides an upper bound on the deviation of the nominal system state trajectory from the uncertain system state trajectory when the same input sequence is applied. 

\begin{proposition} \label{prop:deviation_upper_bound}
    Consider the following system 
    \begin{equation}
        x(n+1) = f(x(n),u(n),w(n))
    \end{equation}
    and the corresponding nominal system
    \begin{equation}
        \tilde x(n+1) = f(\tilde x(n),u(n),0)
    \end{equation}
    with the initial condition $x(n)=\tilde x(n) \in \m{X}$. The deviation of the nominal system state $\tilde{x}$ from the state $x$ over one sampling time is bounded as follows:
    \begin{equation} \label{eqn:error_growth}
        |x(n+1) - \tilde x(n+1)| \leq \sqrt{n_x}L_w \theta 
    \end{equation} 
    for all $x(n), \tilde x(n) \in \m{X}$ and all $w(n) \in \m{W}$. 
\end{proposition}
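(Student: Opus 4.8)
The plan is to obtain \eqref{eqn:error_growth} as an essentially immediate consequence of the Lipschitz continuity Assumption~\ref{asp:continuity_model} together with the definition of the disturbance set $\m W$. Since the \emph{same} control input $u(n)$ is applied to both the uncertain system and its nominal counterpart, and since the two trajectories are initialized at the common point $x(n)=\tilde x(n)$, the successor states can differ only through the disturbance channel. Concretely, I would start from
\begin{equation*}
    |x(n+1) - \tilde x(n+1)| = |f(x(n),u(n),w(n)) - f(\tilde x(n),u(n),0)|
\end{equation*}
and invoke \eqref{eqn:lipschiz_continuity} with $x = x(n)$, $z = \tilde x(n)$, $u = u(n)$, $w = w(n)$; this is legitimate because $x(n),\tilde x(n)\in\m X$ by hypothesis and $u(n)\in\m U$, $w(n)\in\m W$ by construction of the closed loop. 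This gives $|x(n+1)-\tilde x(n+1)| \le L_w|w(n)| + L_x|x(n)-\tilde x(n)|$.

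The second step is to eliminate the state-mismatch term: because $x(n)=\tilde x(n)$ we have $L_x|x(n)-\tilde x(n)|=0$, leaving $|x(n+1)-\tilde x(n+1)| \le L_w|w(n)|$. The final step converts the $\infty$-norm description of $\m W$ into the Euclidean norm in which the Lipschitz estimate is stated: from $w(n)\in\m W$ we have $\|w(n)\|_\infty\le\theta$, hence by equivalence of norms on the disturbance space, $|w(n)| \le \sqrt{n_w}\,\|w(n)\|_\infty \le \sqrt{n_w}\,\theta$. Substituting yields $|x(n+1)-\tilde x(n+1)| \le \sqrt{n_w}\,L_w\,\theta$, which is the asserted bound \eqref{eqn:error_growth} (with $n_w$ in place of $n_x$; these coincide under the dimension convention used here). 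Since $n$, $x(n)$ and $w(n)$ were arbitrary subject only to the stated restrictions, the estimate holds for all $x(n),\tilde x(n)\in\m X$ and all $w(n)\in\m W$.

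I do not expect a genuine obstacle here: the statement is a short structural lemma rather than a deep result. The only points needing care are (i) verifying that the arguments $x(n),\tilde x(n)$ lie in the domain $\m X$ on which \eqref{eqn:lipschiz_continuity} is guaranteed to hold — which is exactly the hypothesis of the proposition — and (ii) the elementary norm bookkeeping between $\|\cdot\|_\infty$ and $|\cdot|$. In the write-up I would stress that this one-step estimate is the base case for the error-propagation arguments used later in the stability analysis: iterating it together with the Lipschitz constant $L_x$, which quantifies how an already-accumulated mismatch is amplified by the dynamics, will produce the multi-step deviation bounds needed to relate the nominal EMPC predictions to the true closed-loop trajectory.
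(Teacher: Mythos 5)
Your proof is correct and follows essentially the same route as the paper: apply the Lipschitz bound \eqref{eqn:lipschiz_continuity} with the common input, use $x(n)=\tilde x(n)$ to kill the $L_x$ term, and convert the $\infty$-norm bound on $w$ to the Euclidean norm. Your observation that the norm-equivalence factor should naturally be $\sqrt{n_w}$ (the paper writes $\sqrt{n_x}$) is a fair point of bookkeeping, but it does not change the argument.
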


\begin{proof}
Let us define the deviation of $\tilde x$ from $x$ as $e$ such that $e=x-\tilde x$. Therefore, $e(n+1) = x(n+1) - \tilde x(n+1)$, which can further be written as:
\begin{equation}
    e(n+1) = f(x(n),u(n),w(n)) - f(\tilde x(n),u(n),0)
\end{equation}
Taking the Euclidean norm of the error $e$ and applying (\ref{eqn:lipschiz_continuity}), the following inequality is obtained
\begin{equation} \label{eqn:error_system}
    |e(n+1)| \leq L_w |w(n)| + L_x |x(n) - \tilde x(n)| = L_w|w(n)| + L_x |e(n)|
\end{equation}
    for all $x(n), \tilde x(n) \in \m{X}$ and $w(n) \in \m{W}$. Since the initial state for both the nominal and the uncertain system are the same i.e. $x(n) = \tilde x(n)$, we have that the initial deviation is 0, i.e. $e(n)=0$. Given that $\|w\|_{\infty}\leq \theta$, $|w|\leq \sqrt{n_x}\theta$. This leads to (\ref{eqn:error_growth}) and proves Proposition \ref{prop:deviation_upper_bound}.
\end{proof}

We now state the main results of this section. Theorem~\ref{thm:1} considers the nominal system of system~(\ref{eqn:actual_system}) and finds a Lyapunov function of the system with respect to the steady state $x_s$. Theorem~\ref{thm:2} will use this Lyapunov function to study the uncertain system to establish the feasible region, finite step convergence, and ultimate stability and robustness of the proposed EMPC. 

\begin{theorem}\label{thm:1}
Consider the nominal system of system~(\ref{eqn:actual_system}) under the control of EMPC (\ref{eqn:zone_opt}). Suppose that Assumption~\ref{asp:strict_dissipativity} holds and $\lambda(\cdot)$, $\alpha(\cdot)$, $s(\tilde x,u)=\ell_e(\tilde x,u)-\ell_e(x_s,u_s)$ are the associated functions that satisfy the condition (\ref{eqn:strict_dissipativity}) for all $\tilde x \in \m{X}$ and $u \in \m{U}$. Define the rotated cost as follows:
\begin{equation} \label{eqn:rotated_cost}
\tilde{\ell}_e(\tilde x,u) = \ell_e(\tilde x,u) - \ell_e(x_s,u_s) + \lambda(\tilde{x}) - \lambda(f(\tilde x,u,0)).
\end{equation}
Then, the following dynamical optimization problem is equivalent to the proposed EMPC~(\ref{eqn:zone_opt}):
\begin{subequations} \label{eqn:new_zone_opt}
    \begin{align} 
        \min_{\bf{v}} & ~~~ \tilde{V}_N(\tilde x(n),\textbf{v}) = \sum_{k=0}^{N-1} \left(\tilde{\ell}_e(z(k),v(k)) + \ell_z(z(k))\right)\label{eqn:positive_definite_cost_function}\\
        s.t. & ~~~ (\ref{eqn:zone_opt_b}) - (\ref{eqn:zone_opt_f})
    \end{align}
\end{subequations}
If Assumption~\ref{asp:weak_controllability} also holds, then the value function of (\ref{eqn:new_zone_opt}) denoted as $\tilde V_N^0(\cdot)$ is a Lyapunov function of the closed-loop system under the control of EMPC~(\ref{eqn:zone_opt}) with respect to the optimal steady-state $x_s$
\end{theorem}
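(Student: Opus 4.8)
The plan is to establish Theorem~\ref{thm:1} in two stages, mirroring its two assertions. First I would show the equivalence of the optimization problems~(\ref{eqn:zone_opt}) and~(\ref{eqn:new_zone_opt}). The key observation is that the rotated economic cost~(\ref{eqn:rotated_cost}) differs from $\ell_e$ by the constant $-\ell_e(x_s,u_s)$ and by the telescoping term $\lambda(z(k)) - \lambda(f(z(k),v(k),0)) = \lambda(z(k)) - \lambda(z(k+1))$ along any trajectory satisfying the nominal dynamics~(\ref{eqn:zone_opt_b}). Summing $\tilde\ell_e$ over $k=0,\dots,N-1$ therefore gives
\begin{equation*}
  \sum_{k=0}^{N-1}\tilde\ell_e(z(k),v(k)) = \sum_{k=0}^{N-1}\ell_e(z(k),v(k)) - N\,\ell_e(x_s,u_s) + \lambda(z(0)) - \lambda(z(N)).
\end{equation*}
Since $z(0)=x(n)$ is fixed by~(\ref{eqn:zone_opt_c}) and $z(N)=x_s$ is fixed by the terminal constraint~(\ref{eqn:zone_opt_f}), the last three terms are constants independent of $\mathbf v$; adding the zone-tracking penalties $\sum_k \ell_z(z(k))$ to both sides shows that $\tilde V_N$ and $V_N$ differ by a constant over the common feasible set. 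Hence the two problems share the same minimizers and the same optimal input $v^*(0)$, which is all the equivalence claim requires.

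Second I would prove that $\tilde V_N^0$ is a Lyapunov function for the closed loop with respect to $x_s$. This needs three ingredients: (i) a lower bound $\tilde V_N^0(x) \geq \alpha_1(|x-x_s|)$ for some $\mathcal K_\infty$ function $\alpha_1$; (ii) an upper bound $\tilde V_N^0(x) \leq \alpha_2(|x-x_s|)$; and (iii) the descent inequality $\tilde V_N^0(x^+) - \tilde V_N^0(x) \leq -\alpha_3(|x-x_s|)$ along the nominal closed loop, where $x^+ = f(x,v^*(0),0)$. For (i), strict dissipativity (Assumption~\ref{asp:strict_dissipativity}) gives $\tilde\ell_e(z,v) \geq \alpha(|z-x_s|) \geq 0$, and the zone-tracking term $\ell_z$ is positive definite, so the $k=0$ term of $\tilde V_N$ already dominates $\alpha(|x(n)-x_s|)$; taking $\alpha_1 = \alpha$ works. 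For (iii), I would use the standard shifted-sequence argument: given the optimal solution at $x$, drop its first element and append the terminal steady-state pair $(x_s,u_s)$; this candidate is feasible at $x^+$ (the terminal constraint is still met because $z(N)=x_s$ maps to $x_s$ under $f(\cdot,\cdot,0)$), and evaluating $\tilde V_N$ on it yields $\tilde V_N^0(x) - \tilde\ell_e(x,v^*(0)) - \ell_z(x) + \tilde\ell_e(x_s,u_s) + \ell_z(x_s)$. Since $\tilde\ell_e(x_s,u_s) = 0$ by construction of the rotated cost and $\ell_z(x_s)=0$ because $x_s\in\m X_e$, this gives the descent $\tilde V_N^0(x^+)-\tilde V_N^0(x) \leq -\tilde\ell_e(x,v^*(0)) - \ell_z(x) \leq -\alpha(|x-x_s|)$, so $\alpha_3=\alpha$ works.

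The main obstacle is ingredient (ii), the $\mathcal K_\infty$ upper bound on $\tilde V_N^0$, and this is precisely where Assumption~\ref{asp:weak_controllability} enters. Weak controllability supplies, for each $x\in\m X_N$, a feasible input sequence with $\sum_{k=0}^{N-1}|v(k)-u_s| \leq \gamma(|x-x_s|)$; I would combine this with local Lipschitz continuity of $f$ (Assumption~\ref{asp:continuity_model}) to bound the resulting state deviations $|z(k)-x_s|$, and then use continuity of $\ell_e$, $\lambda$, and $\ell_z$ (each continuous, each vanishing at the steady state in the appropriate sense) to bound each stage cost by a $\mathcal K$ function of $|x-x_s|$, summing to the desired $\alpha_2$. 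The delicate points are verifying that $\tilde\ell_e$ is continuous and vanishes at $(x_s,u_s)$ — which follows from continuity of $\ell_e$ and $\lambda$ together with the definition of the rotated cost — and that $\ell_z$ is continuous with $\ell_z(x_s)=0$, which follows from $x_s\in\m X_e$ and a standard parametric-optimization argument applied to~(\ref{eqn:zone_penalty_opt}). Once (i)–(iii) are in hand, $\tilde V_N^0$ satisfies the textbook Lyapunov conditions on $\m X_N$ and the theorem follows.
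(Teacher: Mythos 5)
Your proposal is correct and follows essentially the same route as the paper: equivalence via the telescoping of $\lambda$ along trajectories satisfying (\ref{eqn:zone_opt_b}) together with the fixed initial condition (\ref{eqn:zone_opt_c}) and terminal constraint (\ref{eqn:zone_opt_f}), a lower bound on $\tilde V_N^0$ from strict dissipativity, an upper bound from weak controllability, and descent of the value function along the nominal closed loop. The only difference is one of detail: you spell out the shifted-sequence candidate and the construction of the upper bound, steps the paper asserts or defers to the appendix of \cite{diehl2010}.
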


\begin{proof} In this proof, we use $\tilde x$ to denote the state of the nominal system under the control of the proposed EMPC. Based on the definition of the rotated cost as in (\ref{eqn:rotated_cost}) and the condition (\ref{eqn:strict_dissipativity}), it can be concluded that the rotated cost is bounded from below for all $(\tilde x,u) \in \m{Z}$ as follows:
    \begin{equation} \label{eqn:rotated_cost_bound}
        \tilde{\ell}_e(\tilde x,u) \geq \alpha(|\tilde x - x_s|)
    \end{equation}
    Based on the definition of $\ell(\cdot,\cdot)$ in (\ref{eqn:stage_cost}) and the rotated cost in (\ref{eqn:rotated_cost}), the stage cost $\ell(z,u)$ can be equivalently expressed as follows:
    \begin{equation} \label{eqn:auxilliary_stage_cost}
        {\ell}(z,v) =  \tilde{\ell}_e(z,v) + \ell_z(z) + \ell_e(x_s,u_s) - \lambda(z) + \lambda(f(z,v,0))
    \end{equation} 
Based on the above expression of $\ell(\cdot,\cdot)$, the cost function $V_N(\cdot)$ in the optimization problem (\ref{eqn:zone_opt}) at time $n$ can be equivalently expressed as follows:
\begin{equation} \label{eqn:rotated_cost_function}
        {V}_N(\tilde x(n),\textbf{v}) = \sum_{k=0}^{N-1} \left(\tilde{\ell}_e(z(k),v(k)) + \ell_z(z(k))\right) - \lambda(z(0)) + \lambda(z(N)) + N\ell_e(x_s,u_s)
    \end{equation}
    Taking into account the constraint~(\ref{eqn:zone_opt_f}) in EMPC~(\ref{eqn:zone_opt}), the last three terms in the above expression of $V_N(\cdot, \cdot)$ are constants. This implies that if we replace the cost function $V_N(\cdot,\cdot)$ in the EMPC optimization problem (\ref{eqn:zone_opt}) with the new cost function as in (\ref{eqn:positive_definite_cost_function}), the solution of the EMPC optimization problem remains the same. That is, the original EMPC~(\ref{eqn:zone_opt}) is equivalent to the new EMPC (\ref{eqn:new_zone_opt}). Let us denote the optimal value of the cost function (the value function) of the new EMPC~(\ref{eqn:new_zone_opt}) as $\tilde{V}^0_N(\tilde x(n))$. Taking into account (\ref{eqn:rotated_cost_bound}) and the expression of $\tilde{V}_N(\tilde x(n),\textbf{v})$, and noticing that $z(0) = \tilde x(n)$ in the EMPC optimization problem, it can be obtained that:
    \begin{equation}
        \tilde{V}^0_N(\tilde x(n)) \geq \tilde{\ell}_e(z(0),u({}0)) + \ell_z(z(0)) \geq \tilde{\ell}_e(z(0),u(0)) \geq \alpha(|\tilde x(n)-x_s|)
    \end{equation}
    From Assumption \ref{asp:weak_controllability}, there exists a $\beta(\cdot) \in \mathcal{K}_{\infty}$ such that for all $\tilde x(n) \in \m{X}_N$ (see Appendix of \cite{diehl2010}):
    \begin{equation} \label{bounded_above}
        \tilde{V}^0_N(\tilde x(n)) \leq \beta(|\tilde x(n)-x_s|)
    \end{equation}
    For the nominal system, it can be shown that the value function $\tilde{V}^0_N(\cdot)$ is non-increasing and satisfies the following condition:
    \begin{equation} \label{eqn:Lyapunov_nominal}
        \tilde{V}^0_N(\tilde x(n+1)) - \tilde{V}^0_N(\tilde x(n)) \leq - \tilde{\ell}_e(\tilde x(n),u(n)) - \ell_z(\tilde x(n),u(n)) \leq -\alpha(|\tilde x(n)-x_s|)
    \end{equation}
    This makes the value function $\tilde{V}^0_N(\cdot)$ a Lyapunov function of the closed-loop system under the control of EMPC~(\ref{eqn:zone_opt}) with respect to the optimal steady state $x_s$. This proves Theorem~\ref{thm:1}.
\end{proof}
Before presenting Theorem~\ref{thm:2}, we introduce the set $\Omega_\rho$ defined based on the level set of the Lyapunov function $\tilde V_N^0(\cdot)$:
\begin{equation}\label{eqn:levelset}
    \Omega_\rho = \{x\in \m X: \tilde V_N^0(x)\leq \rho\}.
\end{equation}
Based on the above definition, we also define $\Omega_{\rho_{\min}}$ as follows:
\begin{equation}\label{eqn:rhomin}
\Omega_{\rho_{\min}}:=\max\{\tilde V_N^0(x(n+1)): |x(n)-x_s|\leq \alpha^{-1}(K_VL_w\sqrt{n_x}\theta + Hn_xL_w^2\theta^2)\}
\end{equation}
where $K_V$ is a positive constant that bounds the the magnitude of the partial derivative $\dfrac{\partial \tilde V_N^0(x)}{\partial x}$ such that $|\dfrac{\partial \tilde V_N^0(x)}{\partial x}|\leq K_V$ for all $x\in \m X$, and $H$ is the constant associated with the Taylor expansion of $\tilde V_N^0(x)$ (which will be made clearer in the proof of Theorem~\ref{thm:2}). Further, we denote the maximum level set within $\m{X}_N$ as $\Omega_{{\rho}_{\max}}$. 

\begin{theorem} \label{thm:2}
    Consider system (\ref{eqn:actual_system}) in closed-loop with EMPC~(\ref{eqn:zone_opt}). Let the target zone and the economic zone satisfy: $\Omega_{\rho_{\min}} \subset \m X_e \subset \m X_t \subset \Omega_{\rho_{\max}} \subset \m X$. If Assumptions \ref{asp:continuity_model} -- \ref{asp:weak_controllability} hold, the magnitude of the partial derivative $\dfrac{\partial \tilde V_N^0(x)}{\partial x}$ is upper bounded such that $|\dfrac{\partial \tilde V_N^0(x)}{\partial x}|\leq K_V$ for all $x\in \m X$, and if there exist $\epsilon_s>0$, $\rho_s>0$ such that:
    \begin{equation}\label{eqn:condition}
        -\alpha(\rho_s) + K_VL_w\sqrt{n_x}\theta + Hn_xL_w^2\theta^2 \leq -\epsilon_s
    \end{equation}
    where $\alpha(\cdot)$ is a class $\mathcal{K}_{\infty}$ function associated with Assumption~\ref{asp:strict_dissipativity} and as defined in (\ref{eqn:strict_dissipativity}), and $H$ is the constant associated with the Taylor expansion of $\tilde V_N^0(x)$, then the closed-loop system state $x$ converges to the economic zone $\m X_e$ in finite steps and then maintains in $\m X_e$ all the time for any initial condition $x(0)\in \Omega_{\rho_{\max}}$. 
\end{theorem}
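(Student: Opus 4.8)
The plan is to turn the Lyapunov function $\tilde V_N^0(\cdot)$ of Theorem~\ref{thm:1} into a \emph{perturbed} Lyapunov function for the actual closed loop (obtained by applying $u(n)=v^*(0)$ from $\mathcal P_N(x(n))$ to system~(\ref{eqn:actual_system})), and to run the whole argument as a single induction on $n$ started from $x(0)\in\Omega_{\rho_{\max}}\subseteq\m X_N$. Since $\Omega_{\rho_{\max}}\subseteq\m X_N$, the terminal constraint $z(N)=x_s$ makes $\mathcal P_N(x(0))$ feasible; the inductive claim is that $x(n)$ stays in $\Omega_{\rho_{\max}}$, which preserves feasibility. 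For the inductive step, let $\mathbf v^*$ solve $\mathcal P_N(x(n))$ with value $\tilde V_N^0(x(n))$, and set $z^*(1)=f(x(n),v^*(0),0)$. Theorem~\ref{thm:1} (via the shifted feasible input) supplies both $z^*(1)\in\m X_N$ and the nominal decrease $\tilde V_N^0(z^*(1))\le\tilde V_N^0(x(n))-\alpha(|x(n)-x_s|)$, i.e. inequality~(\ref{eqn:Lyapunov_nominal}). The actual successor is $x(n+1)=f(x(n),v^*(0),w(n))$, and Proposition~\ref{prop:deviation_upper_bound}, applied with common initial condition $x(n)$, gives $|x(n+1)-z^*(1)|\le\sqrt{n_x}L_w\theta$.

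Next I would convert this mismatch into a perturbed descent inequality. A Taylor expansion of $\tilde V_N^0$ about $z^*(1)$, using $|\partial\tilde V_N^0/\partial x|\le K_V$ and the remainder constant $H$, gives
\begin{equation*}
\tilde V_N^0(x(n+1))\le\tilde V_N^0(z^*(1))+K_V|x(n+1)-z^*(1)|+H|x(n+1)-z^*(1)|^2,
\end{equation*}
and combining with the nominal decrease and $|x(n+1)-z^*(1)|\le\sqrt{n_x}L_w\theta$,
\begin{equation*}
\tilde V_N^0(x(n+1))-\tilde V_N^0(x(n))\le-\alpha(|x(n)-x_s|)+K_VL_w\sqrt{n_x}\theta+Hn_xL_w^2\theta^2.
\end{equation*}
From this one inequality I would read off, splitting on whether $|x(n)-x_s|\le\alpha^{-1}(K_VL_w\sqrt{n_x}\theta+Hn_xL_w^2\theta^2)$: if it holds, then $\tilde V_N^0(x(n+1))\le\rho_{\min}$ by the definition~(\ref{eqn:rhomin}) of $\rho_{\min}$; if not, the right-hand side is strictly negative, so $\tilde V_N^0$ is non-increasing. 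Using the first alternative when $x(n)$ is near $x_s$ and the second otherwise shows that both $\Omega_{\rho_{\min}}$ and $\Omega_{\rho_{\max}}$ are robustly forward invariant under the EMPC (here one uses $\rho_{\min}<\rho_{\max}$, which follows from the nesting $\Omega_{\rho_{\min}}\subset\m X_e\subset\m X_t\subset\Omega_{\rho_{\max}}$). Invariance of $\Omega_{\rho_{\max}}\subseteq\m X_N$ closes the feasibility induction; invariance of $\Omega_{\rho_{\min}}\subset\m X_e$ means that once the state enters $\Omega_{\rho_{\min}}$ it remains in $\m X_e$ forever.

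It then remains to prove finite-step convergence to $\Omega_{\rho_{\min}}$. Condition~(\ref{eqn:condition}) together with monotonicity of $\alpha$ gives $\tilde V_N^0(x(n+1))-\tilde V_N^0(x(n))\le-\epsilon_s$ whenever $|x(n)-x_s|\ge\rho_s$, and, using the outer bound $\tilde V_N^0(x)\le\beta(|x-x_s|)$ from Theorem~\ref{thm:1}, one checks that the set $\Omega_{\rho_{\max}}\setminus\Omega_{\rho_{\min}}$ on which we are not yet done is contained in $\{x:|x-x_s|\ge\rho_s\}$. By invariance of $\Omega_{\rho_{\max}}$ we have $0\le\tilde V_N^0(x(n))\le\rho_{\max}$ for all $n$, and by invariance of $\Omega_{\rho_{\min}}$ the indices with $x(n)\in\Omega_{\rho_{\max}}\setminus\Omega_{\rho_{\min}}$ form an initial segment $\{0,\dots,T-1\}$; along it $\tilde V_N^0$ drops by at least $\epsilon_s$ per step, so $T\le\tilde V_N^0(x(0))/\epsilon_s\le\rho_{\max}/\epsilon_s$ is finite. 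Hence $x(T)\in\Omega_{\rho_{\min}}\subset\m X_e$ and the state stays in $\m X_e$ thereafter, which is the assertion.

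The main obstacle is the regularity and bookkeeping behind the perturbed descent inequality: the value function $\tilde V_N^0$ of a nonlinear program need not be $C^1$, so the gradient bound $|\partial\tilde V_N^0/\partial x|\le K_V$ and the remainder constant $H$ require justification (local Lipschitzness of $\tilde V_N^0$ plus a one-sided expansion, or an explicit smoothness assumption), and one must verify that the neighborhood of $x_s$ on which the uniform decrease $\epsilon_s$ is not guaranteed is exactly the level set $\Omega_{\rho_{\min}}$ of~(\ref{eqn:rhomin}) — i.e. that $\rho_s$ and $\rho_{\min}$ are chosen mutually consistently. A secondary subtlety is that the actual (disturbed) state, unlike the nominal one, is not a priori in $\m X_N$, so the feasibility claim cannot be separated from the invariance claim; both must be established together by the single induction above.
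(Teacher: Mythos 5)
Your proposal is correct and follows essentially the same route as the paper: the same perturbed Lyapunov argument built from Proposition~\ref{prop:deviation_upper_bound}, the same Taylor expansion with the bounds $K_V$ and $H$, the same combined descent inequality giving strict decrease by $\epsilon_s$ while $|x-x_s|\ge\rho_s$, and the same conclusion via invariance of $\Omega_{\rho_{\min}}\subset\m X_e$, with your explicit feasibility induction and the bound $T\le\rho_{\max}/\epsilon_s$ merely making precise what the paper states informally. The one step you flag yourself --- verifying that the states not yet in $\Omega_{\rho_{\min}}$ lie in $\{x:|x-x_s|\ge\rho_s\}$, i.e.\ that $\rho_s$ is consistent with the radius $\alpha^{-1}(K_VL_w\sqrt{n_x}\theta+Hn_xL_w^2\theta^2)$ used in the definition (\ref{eqn:rhomin}) of $\Omega_{\rho_{\min}}$ --- is precisely the bookkeeping the paper's own proof also glosses over, so it is not a gap you introduced.
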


\begin{proof}
    In this proof, we consider applying EMPC~(\ref{eqn:zone_opt}) which is designed based on the nominal system to the actual system with disturbance $w$. At time instant $n$, the EMPC optimization problem is solved with the actual system state $x(n)$ as the initial condition and only the first input value in the optimal input trajectory is applied to the system. Applying Proposition~\ref{prop:deviation_upper_bound}, from $n$ to $n+1$, the deviation of the actual system state $x(n+1)$ from the nominal system state $\tilde x(n+1)$ is bounded as following:  
    \begin{equation}\label{eqn:prooftheta}
        |x(n+1)-\tilde x(n+1)| \leq  \sqrt{n_x}L_w \theta
    \end{equation}
Using Taylor expansion, we can obtain the following relation:
    \begin{equation}
        \tilde V_N^0(x(n+1)) = \tilde V_N^0(\tilde x(n+1)) + \left.\dfrac{\partial { \tilde V_N^0(x)}}{\partial x}\right|_{\tilde x(n+1)}(x(n+1) - \tilde x(n+1)) + H.O.T
    \end{equation}
where $H.O.T$ includes the high order terms in the above Taylor expansion. For $x\in \m X$, a positive constant $H$ can be found such that the high order terms satisfy the following constraint:
    \begin{equation}\label{eqn:proof_HOT}
        H.O.T\leq H|x(n+1) - \tilde x(n+1)|^2
    \end{equation}
    Taking into account that the initial condition ($\tilde x(n) = x(n)$) when solving the EMPC optimization, (\ref{eqn:Lyapunov_nominal}), (\ref{eqn:prooftheta})--(\ref{eqn:proof_HOT}), it can be derived the following inequality:
    \begin{equation}\label{eqn:proof_V}
        \tilde V_N^0(x(n+1)) - \tilde V_N^0(x(n)) \leq -\alpha(|x(n) - x_s|) + K_VL_w\sqrt{n_x}\theta + Hn_xL_w^2\theta^2
    \end{equation}
    If condition~(\ref{eqn:condition}) is satisfied, from (\ref{eqn:proof_V}), it can be seen that
    \begin{equation}\label{eqn:proof_V2}
        \tilde V_N^0(x(n+1)) - \tilde V_N^0(x(n)) \leq -\epsilon_s
    \end{equation}
    for all $x(n)\in \Omega_{\rho_{\max}}$ and $|x(n)-x_s|\geq \rho_s$. This implies that as long as $|x-x_s|\geq \rho_s$, the Lyapunov function keeps decreasing. By applying (\ref{eqn:proof_V2}) recursively, it is proved that the system state enters a region such that $|x-x_s|\geq \rho_s$ in finite steps. Given the definition of $\Omega_{\rho_{\min}}$, it futher implies that once the state satisfies $|x-x_s|\geq \rho_s$, the state will remain in $\Omega_{\rho_{\min}}$ all the time. Then, the actual system under the control of the proposed EMPC will eventually converge to $\Omega_\rho$. Given that $\Omega_{\rho_{\min}}\subset\m X_e$, this proves that the system state enters the economic zone in finite steps and then remains within the economic zone. This proves Theorem~\ref{thm:2}.
\end{proof}

\begin{remark}\label{rmk:boundary_steady_state}
The use of general economic objective in economic MPC may drive the system states to operate close to the operating constraints. This is no different in the zone economic MPC formulation. It is therefore possible that the optimal steady state within the desired economic zone is on the boundary. Since Theorem \ref{thm:2} require that the optimal operating point be in the interior of the desired economic zone, this needs to be resolved. One way to achieve this is to construct a smaller economic zone and then use that in the controller design. This way, the desired economic zone will be tracked once the smaller economic zone is tracked. Another approach is to construct an economic zone with a bigger risk factor and then track this new economic zone while ensuring that system's states go to the optimal operating point of the desired economic zone. To achieve this however, the cost function may need to be regularized to ensure that the steady-state point is tracked by the controller once the states are within the economic zone. 
\end{remark}


\section{Illustrative example}

In this section, we demonstrate the efficacy of our proposed controller using a chemical process. We first describe the chemical process example used in our analysis. Subsequently, we consider the impact of the risk factor on the asymptoptic average economic performance of our proposed controller and then finally compare the performance of our proposed controller to that of the conventional economic MPC.

\subsection{Process description}
Consider a well-mixed continuously stirred tank reactor (CSTR) where a single first-order irreversible reaction of the form $A \rightarrow B$ takes place. Since the reaction is exothermic, thermal energy is removed from the reactor through a cooling jacket. Assuming constant volume reaction mixture, the following nonlinear differential equations are obtained based on energy balance and component balance for reactant $A$:
\begin{subequations} \label{eqn:cstr_equations}
    \begin{align} 
        \frac{dC_A}{dt} ={} & \frac{q}{V}(C_{Af} - C_A) - k_0 \exp(-\frac{E}{RT})C_A \\
        \frac{dT}{dt} ={} & \frac{q}{V}(T_{f} - T) +  \frac{-\Delta H}{\rho C_p} k_0 \exp(-\frac{E}{RT})C_A + \frac{UA}{V \rho C_p}(T_c - T)
    \end{align}
\end{subequations}
where $C_A$ and $T$ denote the reactant concentration and temperature of the reaction mixture in $mol/L$ and $K$ respectively, $T_c$ denotes the temperature of the coolant stream in $K$, $q$ denotes the volumetric flow rate of the inlet and outlet streams of the reactor in $L/min$, $C_{Af}$ denotes the concentration of reactant $A$ in the feed stream, $V$ denotes the volume of the reaction mixture, $k_0$ denotes the reaction rate pre-exponential factor, $E$ denotes the activation energy, $R$ is the universal gas constant, $\rho$ is the density of the reaction mixture, $T_f$ is the temperature of the feed stream, $C_p$ is the specific heat capacity of the reaction mixture, $\Delta H$ is the heat of reaction and $UA$ is the heat transfer coefficient between the cooling jacket and the reactor. The values of the parameters used in the simulations are listed in Table \ref{tb:parameters}. A linear version of this model was used in \cite{bayer2016} in the context of robust tube-based economic MPC. 

\begin{table}[hbt!]
  \begin{center}
  \caption{Table of parameter values}\label{tb:parameters}
    \begin{tabular}{ccl}
    Parameter & Unit & Value \\\hline
    $q$ & $L/min$ & $100.0$ \\
    $V$ & $L$ & $100.0$ \\
    $c_{Af}$ & $mol/L$ & $1.0$ \\
    $T_f$ & $K$ & 350.0 \\
    $E/R$ & $K$ & $8750.0$ \\
    $k_0$ & $min^{-1}$ & $7.2 \times 10^{10}$ \\
    $-\Delta H$ & $J/mol$ & $5.0 \times 10^4$ \\
    $UA$ & $J/min\cdot K$ & $5.0 \times 10^4$ \\
    $c_p$ & $J/g\cdot K$ & $0.239$ \\
    $\rho$ & $g/L$ & $1000.0$ \\ \hline
    \end{tabular}
  \end{center}
\end{table}

The nonlinear model of (\ref{eqn:cstr_equations}) is discretized using a step-size  $h = 0.1$ $min$ to obtain a discrete-time nonlinear state space model of the following form:
\begin{equation}
    x(n+1) = f(x(n),u(n), w(n)) 
\end{equation}
where $x=[C_A ~ T]^T$ is the state vector, $u=T_c$ is the input and $w=[C_{Af} ~ T_f]^T$ is the disturbance vector. The state, input and disturbance are assumed to be subject to the following hard constraints: $0.0 \leq x_1 \leq 1.0$, $345.0 \leq x_2 \leq 355.0$, $285.0 \leq u \leq 315.0$, $0.9 \leq w_1 \leq 1.1$ and $348.0 \leq w_2 \leq 352.0$. The disturbances are assumed to be uniformly distributed in the constraints with their nominal values being $1.0$ and $350.0$ as shown in Table \ref{tb:parameters}.

The economic objective $\ell_e$ is to minimize the concentration of reactant $A$ (i.e. maximize the concentration of reactant $B$) in the reactor such that
\begin{equation}\label{eqn:economic_cost}
    \ell_e(x,u) = c_A
\end{equation}
To ensure that the economic cost is optimized within a reasonable temperature range, a zone tracking objective $\ell_z$ is incorporated into the control objective where
\begin{equation}
    \ell_z (x) = 
    \begin{cases}
        10.0\times(348.0 - T)^2 &~~~ \text{for } T < 348.0 \\
        0 & ~~~ \text{for } 348.0 \leq T \leq 352.0\\
        10.0\times(352.0 - T)^2 &~~~ \text{for } T > 352.0 
    \end{cases}
\end{equation}
 The zone tracking objective is a quadratic function that penalizes the deviation of the system states from the target zone $\m X_t$. The overall control objective therefore becomes 
\begin{equation}\label{eqn:control_objective}
    \ell(x,u) := \ell_e(x,u) + \ell_z(x) = c_A + 
    \begin{cases}
        10.0\times(348.0 - T)^2 &~~~ \text{for } T < 348.0 \\
        0 & ~~~ \text{for } 348.0 \leq T \leq 352.0\\
        10.0\times(352.0 - T)^2 &~~~ \text{for } T > 352.0 
    \end{cases}
\end{equation}
This control objective is multi-objective and can be achieved by manipulating the temperature of the coolant $T_c$ in the cooling jacket. Notice that in this example, only the temperature has a zone requirement. The zone on the concentration therefore spans the entire constraint. The safe operating region (hard constraints) as well as the target zone $\m X_t$ for this example is presented in Figure \ref{fig:original_zones}.

\begin{figure}[hbt!]
  \begin{center}
    \includegraphics[width=0.7\textwidth]{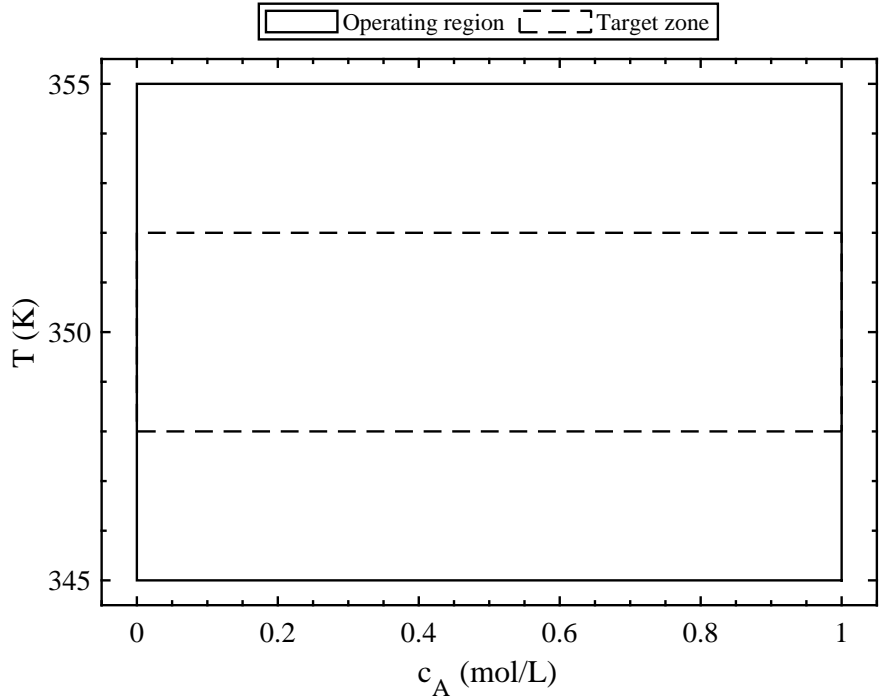}    
    \caption{The sets used in the controller design. The operating region (solid line) is the hard constraint on the states where the process must be operated within without any violation. The target zone (dashed line) is a soft constraint on the states which ensures that the economic cost is optimized within reasonable temperature bounds.} 
    \label{fig:original_zones}
  \end{center}
\end{figure}

In the simulations, unless otherwise stated, the control and prediction horizons of all controllers are $N=20$ respectively. The $l_1$ norm weight $c_1$ and $l_2$ norm weight $c_2$ in (\ref{eqn:zone_opt}) for our proposed controllers were chosen as 0 and 10 respectively. We assume that all the system states are available to the controller. The proposed robust economic MPC scheme and the traditional economic MPC scheme were numerically transcribed using the direct multiple shooting method and solved using IPOPT \cite{wachter2006}. The optimization problems were implemented in the modeling language JuMP \cite{DunningHuchetteLubin2017}. Each dynamic simulation was run for 1000 time steps and the asymptotic average performance, computed thereafter. 

In the subsequent analysis, the optimal steady-state economic cost $\ell_e^*$ was obtained by solving the steady-state optimization problem in (\ref{eqn:ss_opt}) to obtain the optimal steady state $x_s$ together with the corresponding steady-state input $u_s$. The optimal steady-state cost $\ell_e^*$ was then obtained by computing the value of the economic cost in (\ref{eqn:economic_cost}) using the steady-state values. The asymptotic average performance $\ell_{avg}$ on the other hand was obtained by simulating the closed-loop disturbed system for $T$ time steps and finding the average of the overall control objective in (\ref{eqn:control_objective}) using the equation

\begin{equation} \label{average_cost}
     \ell_{avg} = \frac{1}{T} \sum_{n=0}^{T-1} \ell(x(n),u(n))
\end{equation}
where $T=1000$ time steps. $\ell_{avg}$ considers the effects of the target zone violations and is used to assess the performance of the controllers in the analysis.

\subsection{Effect of risk factor $\delta$}

We first investigate the effect of the design parameter $\delta$ on the optimal steady-state economic cost $\ell_e^*$ and the asymptotic average performance $\ell_{avg}$ of the closed-loop system with the proposed economic zone MPC algorithm.  This was achieved by varying the risk factor $\delta$ and determining $\ell_e^*$ in the associated economic zone as well as $\ell_{avg}$. As can be seen from the results in Figure \ref{fig:risk_factor_effects}, the value of both performance measures generally decrease as the risk factor increases until at $\delta=35$ where $\ell_{avg}$ begins to increase. 

\begin{figure}[hbt!]
  \begin{center}
    \includegraphics[width=0.7\textwidth]{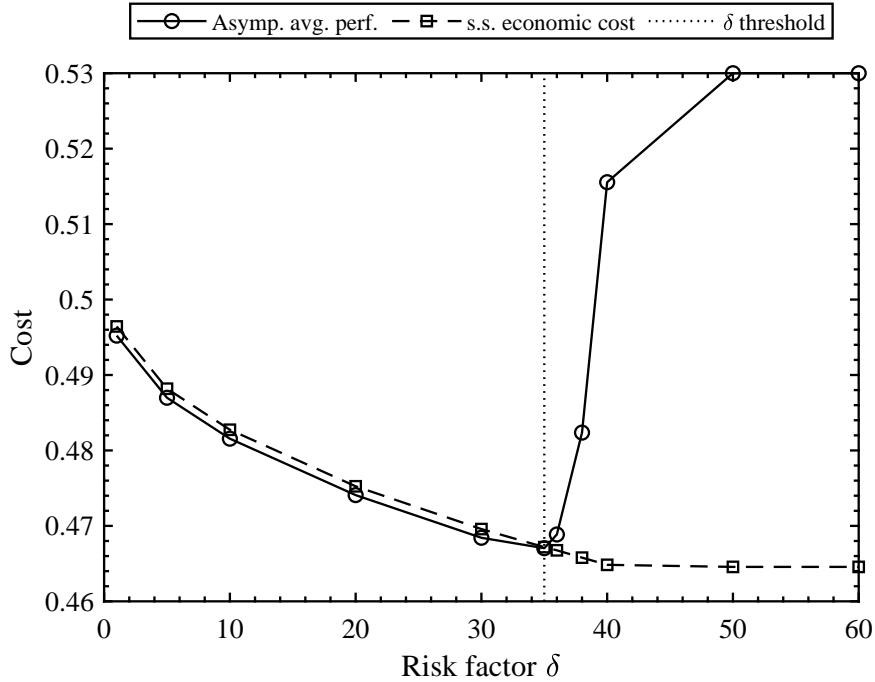}    
    \caption{Effect of risk factor on the best steady-state cost in the economic zone and the closed-loop asymptotic average performance. The dotted lines show the theshold value of the risk factor after which the closed-loop asymptotic average performance begins to deteriorate implying a violation of the target zone. (Solid line with circle markers: Asymptotic average performance, Dashed line with square markers: Optimal steady-state cost, Dotted line: Risk factor theshold)} 
    \label{fig:risk_factor_effects}
  \end{center}
\end{figure}

\begin{figure}[hbt!]
  \begin{center}
    \includegraphics[width=0.7\textwidth]{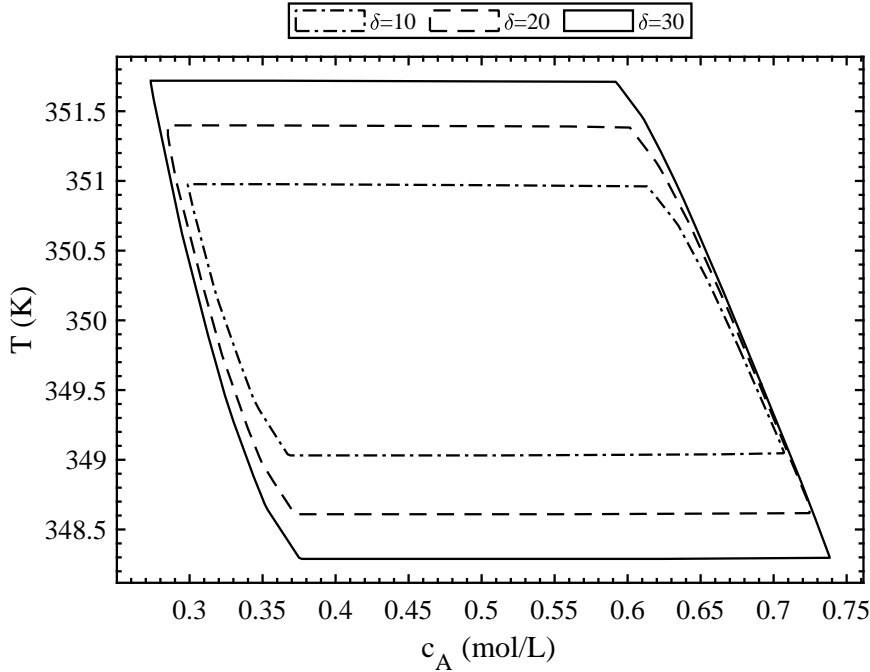}    
    \caption{Effect of risk factor on the economic zone. As the $\delta$ increased, the size of the economic zone also increased and vice versa. The magnitude of the risk factor $\delta$ therefore determines the size of the economic zone and ultimately the conservativeness of the controller. (Solid line: $\delta=30$, Dashed line: $\delta=20$, Dash-dotted line: $\delta=10$)} 
    \label{fig:economic_zones}
  \end{center}
\end{figure}

As mentioned earlier and as shown in Figure \ref{fig:economic_zones}, the size of the economic zone increases as the risk factor increases. This implies that a controller designed with a larger risk factor has a larger operating room to optimize the process economics compared to a controller designed with a smaller risk factor. To explain the reason for the difference in the plots of optimal steady-state economic cost $\ell_e^*$ and the asymptotic average performance $\ell_{avg}$, we look at how the values were obtained. The optimal steady-state economic cost was obtained by solving the static optimzation problem of (\ref{eqn:ss_opt}). Since the effects of the disturbances are not explicitly considered in the steady-state optimization problem, $\ell_e^*$ represents the potentially achievable economic cost. $\ell_{avg}$ on the other hand represents the actual cost achieved in the closed-loop system affected by the disturbance. A large economic zone therefore allowed the process to operate close to the target zone. The presence of the disturbance caused the process to violate the target zone and this resulted in a poor economic performance (on average). A voilation of the target zone means that the conditions in Theorem \ref{thm:2} were not satisfied.

The analysis in Figure \ref{fig:risk_factor_effects} implies that the risk factor should not be arbitrarily chosen. It should be chosen such that the conditions in Theorem \ref{thm:2} are satisfied to ensure that the states of the system converge to the target zone $\m X_t$ in finite time and stays in it thereafter even in the presence of disturbances. For this illustrative example, any $\delta$ value above the threshold value of 35 resulted in a poor closed-loop asymptotic average performance $\ell_{avg}$. Intuitively, the risk factor is a design parameter which offers a trade-off between a conservative controller or a more risk-taking one to maximize the economic objective. This implies that the risk factor $\delta$ in the proposed controller needs to be carefully tuned to get a good trade-off.

\subsection{Comparison with an EMPC tracking the target zone}
Following the analysis of the effects of the risk factor on the controller performance, we compare the closed-loop performance of our proposed controller (ZEMPC tracking the economic zone) to that of conventional EMPC (ZEMPC tracking the target zone). The economic zone $\m X_e$ for our proposed controller was determined using a risk factor of 30. However, as mentioned in Remark \ref{rmk:boundary_steady_state}, the optimal steady state for this process lies on the boundary of $\m X_e$. To ensure that the conditions in the Theorems are satisfied, a smaller economic zone with $\delta = 10$ was computed and the optimal steady state within the smaller economic zone determined. Figure \ref{fig:operating_zones} shows the sets used in the proposed control algorithm with the optimal steady state within its interior. The conventional EMPC on the other hand was designed to track the original target zone without any modification. In both cases, $c_1$ and $c_2$ were selected to be $0$ and $10$ respectively. The steady state values with and without the computed economic zone are $(x_s,u_s) = ([0.483~350.970]^T,299.709)$ and $(x_s,u_s) = ([ 0.465~352.000]^T, 299.413)$ respectively.

\begin{figure}
  \begin{center}
    \includegraphics[width=0.7\textwidth]{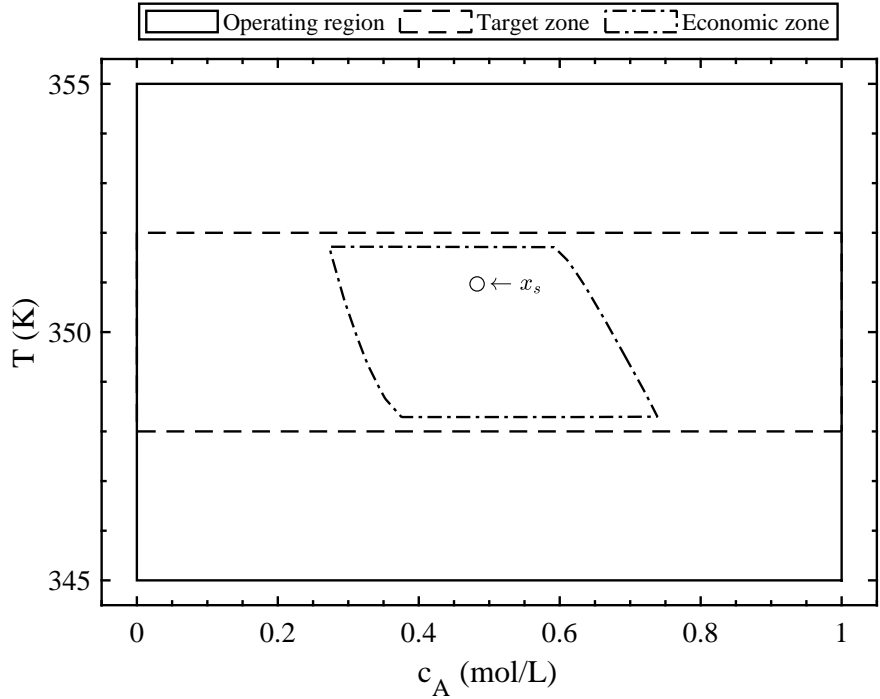}    
    \caption{The sets used in the controller design. (Solid line: Hard constraint, Dashed line: Original zone, Dash-dotted line: Economic zone)} 
    \label{fig:operating_zones}
  \end{center}
\end{figure}

The results of the comparison is shown in Table \ref{tb:average_costs}. As can be seen in the table, our proposed controller gave (on average) a lower asymptotic average performance compared to the conventional EMPC in the presence of the disturbance. 
To understand why this is so, Figure \ref{fig:comparison_trajectories} has been provided. Figure \ref{fig:comparison_trajectories} shows the state, input and average performance trajectories of the closed-loop system under the two controllers in the presence of disturbance. It can be observed that our proposed EMPC forces the system to operate at a temperature below the $352.0K$ thus allowing room for the disturbances to occur without any significant effects on the cost. This results in a fairly stable process economics. The conventional EMPC on the other hand operated close to $352.0K$. Thus, the effects of the disturbances caused the system to operate in an expensive zone which results in a much higher cost. 

\begin{table}
  \begin{center}
  \caption{Asymptotic average performance for the controllers}\label{tb:average_costs}
  \vspace{2mm}
    \begin{tabular}{ccc}
        \hline
    Controller & $\ell_{avg}$ \\  
        \hline
    Conventional EMPC & $0.530$ \\ 
        Proposed EMPC & $0.482$ \\ 
        \hline
    \end{tabular}
  \end{center}
\end{table}

\begin{figure}
  \begin{center}
    \includegraphics[width=0.8\textwidth]{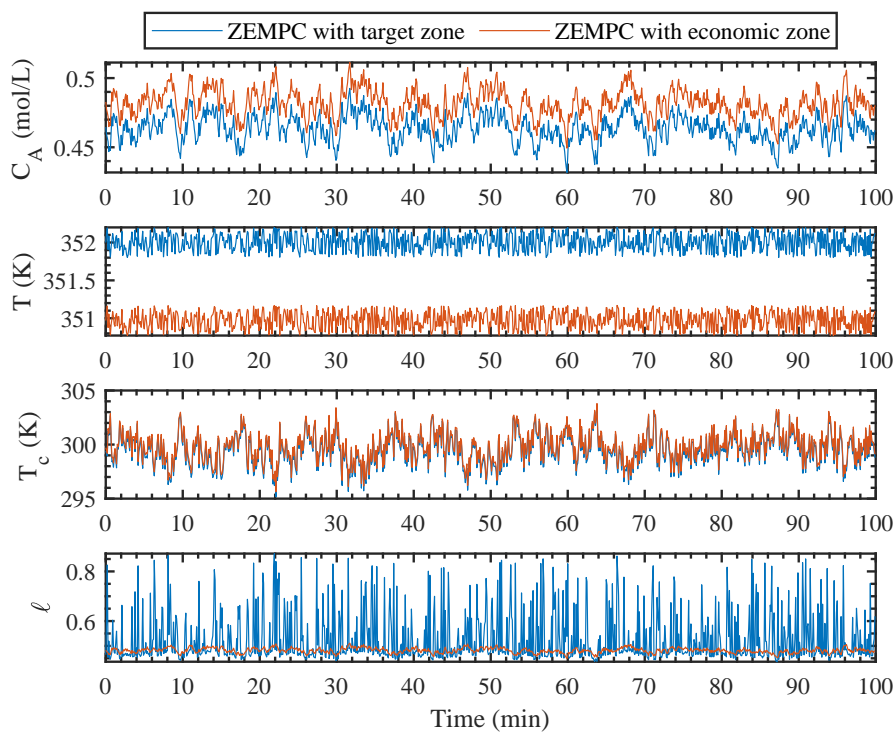}    
    \caption{State, input and economic cost trajectories of the CSTR process under conventional zone EMPC (blue) and our proposed zone EMPC (red)} 
    \label{fig:comparison_trajectories}
  \end{center}
\end{figure}

\section{Concluding remarks}
In this work, we presented a robust EMPC framework with zone tracking for general nonlinear systems. The proposed design ensures that the zone tracking objective can be achieved in finite steps and the economic performance in the operation is optimized. A robust control invariant set within the original target zone is determined and is used as the actual zone tracked. To optimize the economic performance within the zone in the presence of disturbances, the notion of risk factor in the controller design was adopted. An algorithm to determine the economic zone to be tracked was provided. The risk factor determines the conservativeness of the controller and provides a way to tune the EMPC for better economic performance. A nonlinear chemical example was presented to demonstrate the performance of the proposed formulation.

In the future work, it would be of interest to explore an extension of the proposed approach to processes whose optimal operation is not necessarily steady-state but time-varying. Finally, since the proposed controller depends on finding robust control invariant sets (which are generally difficult to obtain for large-scale systems), it will be worth exploring simpler and cheaper ways of obtaining the economic zone. 

\section{Acknowledgement}
This work is supported in part by the Natural Sciences and Engineering Research Council of Canada.

\bibliographystyle{ieeetr}
\bibliography{ref}

\begin{thebibliography}{10}

\bibitem{rawlings2012}
J.~B. Rawlings, D.~Angeli, and C.~N. Bates, ``Fundamentals of economic model
  predictive control,'' in {\em 2012 IEEE 51st IEEE Conference on Decision and
  Control (CDC)}, pp.~3851--3861, IEEE, 2012.

\bibitem{liu2016}
S.~Liu and J.~Liu, ``Economic model predictive control with extended horizon,''
  {\em Automatica}, vol.~73, pp.~180--192, 2016.

\bibitem{ellis2014}
M.~Ellis, H.~Durand, and P.~D. Christofides, ``A tutorial review of economic
  model predictive control methods,'' {\em Journal of Process Control},
  vol.~24, no.~8, pp.~1156--1178, 2014.

\bibitem{liu2015}
S.~Liu, J.~Zhang, and J.~Liu, ``Economic {MPC} with terminal cost and
  application to an oilsand primary separation vessel,'' {\em Chemical
  Engineering Science}, vol.~136, pp.~27--37, 2015.

\bibitem{decardi2018}
B.~Decardi-Nelson, S.~Liu, and J.~Liu, ``Improving flexibility and energy
  efficiency of post-combustion {CO}$_2$ capture plants using economic model
  predictive control,'' {\em Processes}, vol.~6, no.~9, p.~135, 2018.

\bibitem{zhang2020}
Y.~Zhang, B.~Decardi-Nelson, J.~Liu, J.~Shen, and J.~Liu, ``Zone economic model
  predictive control of a coal-fired boiler-turbine generating system,'' {\em
  Chemical Engineering Research and Design}, vol.~153, pp.~246--256, 2020.

\bibitem{griffith2017}
D.~W. Griffith, V.~M. Zavala, and L.~T. Biegler, ``Robustly stable economic
  nmpc for non-dissipative stage costs,'' {\em Journal of Process Control},
  vol.~57, pp.~116--126, 2017.

\bibitem{angeli2011}
D.~Angeli, R.~Amrit, and J.~B. Rawlings, ``On average performance and stability
  of economic model predictive control,'' {\em IEEE Transactions on Automatic
  Control}, vol.~57, no.~7, pp.~1615--1626, 2011.

\bibitem{amrit2011}
R.~Amrit, J.~B. Rawlings, and D.~Angeli, ``Economic optimization using model
  predictive control with a terminal cost,'' {\em Annual Reviews in Control},
  vol.~35, no.~2, pp.~178--186, 2011.

\bibitem{mayne2016}
D.~Mayne, ``Robust and stochastic model predictive control: Are we going in the
  right direction?,'' {\em Annual Reviews in Control}, vol.~41, pp.~184--192,
  2016.

\bibitem{huang2012}
R.~Huang, L.~T. Biegler, and E.~Harinath, ``Robust stability of economically
  oriented infinite horizon {NMPC} that include cyclic processes,'' {\em
  Journal of Process Control}, vol.~22, no.~1, pp.~51--59, 2012.

\bibitem{lucia2014}
S.~Lucia, J.~A. Andersson, H.~Brandt, M.~Diehl, and S.~Engell, ``Handling
  uncertainty in economic nonlinear model predictive control: A comparative
  case study,'' {\em Journal of Process Control}, vol.~24, no.~8,
  pp.~1247--1259, 2014.

\bibitem{mao2020}
Y.~Mao, S.~Liu, and J.~Liu, ``Robust economic model predictive control of
  nonlinear networked control systems with communication delays,'' {\em
  International Journal of Adaptive Control and Signal Processing}, vol.~34,
  no.~5, pp.~614--637, 2020.

\bibitem{bayer2014}
F.~A. Bayer, M.~A. M{\"u}ller, and F.~Allg{\"o}wer, ``Tube-based robust
  economic model predictive control,'' {\em Journal of Process Control},
  vol.~24, no.~8, pp.~1237--1246, 2014.

\bibitem{bayer2016}
F.~A. Bayer, M.~Lorenzen, M.~A. M{\"u}ller, and F.~Allg{\"o}wer, ``Robust
  economic model predictive control using stochastic information,'' {\em
  Automatica}, vol.~74, pp.~151--161, 2016.

\bibitem{dong2018}
Z.~Dong and D.~Angeli, ``Tube-based robust economic model predictive control on
  dissipative systems with generalized optimal regimes of operation,'' in {\em
  2018 IEEE Conference on Decision and Control (CDC)}, pp.~4309--4314, IEEE,
  2018.

\bibitem{liu2018}
S.~Liu and J.~Liu, ``Economic model predictive control with zone tracking,''
  {\em Mathematics}, vol.~6, no.~5, p.~65, 2018.

\bibitem{grosman2010}
B.~Grosman, E.~Dassau, H.~C. Zisser, L.~Jovanovi{\v{c}}, and F.~J. Doyle~III,
  ``Zone model predictive control: a strategy to minimize hyper-and
  hypoglycemic events,'' {\em Journal of Diabetes Science and Technology},
  vol.~4, no.~4, pp.~961--975, 2010.

\bibitem{privara2011}
S.~Privara, J.~{\v{S}}irok{\`y}, L.~Ferkl, and J.~Cigler, ``Model predictive
  control of a building heating system: The first experience,'' {\em Energy and
  Buildings}, vol.~43, no.~2-3, pp.~564--572, 2011.

\bibitem{mao2018}
Y.~Mao, S.~Liu, J.~Nahar, J.~Liu, and F.~Ding, ``Soil moisture regulation of
  agro-hydrological systems using zone model predictive control,'' {\em
  Computers and Electronics in Agriculture}, vol.~154, pp.~239--247, 2018.

\bibitem{liu2019}
S.~Liu, Y.~Mao, and J.~Liu, ``Model predictive control with generalized zone
  tracking,'' {\em IEEE Transactions on Automatic Control}, 2019.

\bibitem{blanchini1999}
F.~Blanchini, ``Set invariance in control,'' {\em Automatica}, vol.~35, no.~11,
  pp.~1747--1767, 1999.

\bibitem{decardi2021}
B.~Decardi-Nelson and J.~Liu, ``Computing robust control invariant sets of
  constrained nonlinear systems: A graph algorithm approach,'' {\em Computers
  \& Chemical Engineering}, vol.~145, p.~107177, 2021.

\bibitem{heidarinejad2012}
M.~Heidarinejad, J.~Liu, and P.~D. Christofides, ``Economic model predictive
  control of nonlinear process systems using lyapunov techniques,'' {\em AIChE
  Journal}, vol.~58, no.~3, pp.~855--870, 2012.

\bibitem{diehl2010}
M.~Diehl, R.~Amrit, and J.~B. Rawlings, ``A lyapunov function for economic
  optimizing model predictive control,'' {\em IEEE Transactions on Automatic
  Control}, vol.~56, no.~3, pp.~703--707, 2010.

\bibitem{wachter2006}
A.~W{\"a}chter and L.~T. Biegler, ``On the implementation of an interior-point
  filter line-search algorithm for large-scale nonlinear programming,'' {\em
  Mathematical Programming}, vol.~106, no.~1, pp.~25--57, 2006.

\bibitem{DunningHuchetteLubin2017}
I.~Dunning, J.~Huchette, and M.~Lubin, ``Jump: A modeling language for
  mathematical optimization,'' {\em SIAM Review}, vol.~59, no.~2, pp.~295--320,
  2017.

\end{thebibliography}

\end{document}